\address[exner@ujf.cas.cz]{Pavel Exner$^{1,3}$}
\address[minakov.ilt@gmail.com]{Alexander Minakov$^{1,2}$}
\address[l.parnovski@ucl.ac.uk]{Leonid Parnovski$^4$}
\address[]{$^{(1)}$Doppler Institute for Mathematical Physics and Applied Mathematics, \\
Czech Technical University in Prague, B\v{r}ehov\'a 7, 11519
Prague, Czech Republic}
\address[]{$^{(2)}$Department of Physics, Faculty of Nuclear
Science and Physical Engineering, \\Czech Technical University in
Prague, Pohrani\v{c}ni 1288/1, 40501 D\v{e}\v{c}in, Czech \\
Republic}
\address[]{$^{(3)}$Department of Theoretical Physics, Nuclear Physics Institute ASCR, 25068
\v{R}e\v{z} near Prague, Czech Republic}
\address[]{$^{(4)}$Department of Mathematics, University College London \\
Gower Street, London WC1E 6BT, United Kingdom}
\newtheorem{theorem}{Theorem}[section]
\newtheorem{lemma}[theorem]{Lemma}
\theoremstyle{definition}
\newtheorem{example}[theorem]{Example}
\newtheorem{remarks}[theorem]{Remarks}
\newcommand{\e}{\mathrm{e}}
\renewcommand{\i}{\mathrm{i}}
\renewcommand{\d}{\mathrm{d}}
\newcommand{\ds}{\displaystyle}
\newcommand{\dsfrac}{\ds\frac}
\newcommand{\OO}{\mathcal{O}}
\newcommand{\R}{\mathbb{R}}
\newcommand{\Z}{\mathbb{Z}}
\renewcommand{\(}{\left(}
\renewcommand{\)}{\right)}
\title[Eigenvalue asymptotics for a Robin problem]{Asymptotic eigenvalue estimates for a Robin problem with a large parameter}
\author[Abbreviated author(s)'s name(s)]{Pavel Exner, Alexander Minakov, and Leonid Parnovski\thanks{The research has been supported by  the projects ``Support of inter-sectoral mobility and quality enhancement of research teams at Czech Technical University in Prague'', CZ.1.07/2.3.00/30.0034, sponsored by European Social Fund in the Czech Republic, and \mbox{14-06818S} of the Czech Science Foundation. A.M. expresses his gratitude to Stepan Manko and Satoshi Ohya for useful discussions.}}
\begin{document}


\maketitle

\begin{abstract}
Robin problem for the Laplacian in a bounded planar domain with a smooth boundary and a large parameter in the boundary condition is considered. We prove a two-sided three-term asymptotic estimate for the negative eigenvalues. Furthermore, improving the upper bound we get a two term asymptotics in terms of the coupling constant and the maximum of the boundary curvature.
\end{abstract}

\begin{classification}
35P15, 35J05.
\end{classification}

\begin{keywords}
Laplacian, Robin problem, eigenvalue asymptotics.
\end{keywords}

\section{Introduction and the main result}

Asymptotic properties of eigenvalues belong among the most often studied problems in the spectral theory. In this paper we are going to discuss an asymptotics of the ``attractive'' Robin problem for the Laplacian in a bounded domain of $\R^2$ in the situation when the parameter $\beta$ in the boundary condition assumes large values. The problem has a physical motivation; it naturally arises in the study of reaction-diffusion equation where a distributed absorbtion competes with a boundary source -- see \cite{Lacey...1996}, \cite{Lacey...1998} for details. At the same the question is of mathematical interest. In a recent paper, Levitin and Parnovski \cite{Levitin_Parnovskii 2008} investigated the asymptotic behavior of the principal eigenvalue and showed that its leading term is $-c\beta^2$ where $c=1$ if the domain boundary is smooth and $c>1$ if it has angles. The same one-term asymptotics is known to hold in the former case also for higher eigenvalues \cite{DK10}.

A related asymptotic problem is encountered in the theory of leaky quantum graphs \cite{Ex08} where the dynamics is not constrained to a bounded region, instead it is governed by a singular Schr\"odinger operator with an attractive interaction supported by a manifold or complex of a lower dimension. A particularly close analogy occurs in the two-dimensional situation when the interaction support is closed smooth loop; using a combination of bracketing and estimates with separated variables, one is able to derive an asymptotic expansion of negative eigenvalues \cite{Exner_Yoshitomi 2002} in which the absolute term with respect to the coupling parameter is given by a one-dimensional Schr\"odinger operator with a potential determined by the geometry of the problem, specifically the curvature of the loop.

This inspires the question whether the technique used for the singular Schr\"o\-dinger operators cannot be used also for Robin ``billiards'' with a smooth boundary. This is the main topic of the present paper. We are going to show first that in distinction to the Schr\"odinger case the method of \cite{Exner_Yoshitomi 2002} does not yield an asymptotic expansion, but two-sided asymptotic estimates only, which squeeze only when the domain is a circular disc. On the other hand, these estimates hold true not only for the principal eigenvalue, and moreover, they have three terms in the powers of $\beta$ which improves, in particular, the result obtained in \cite{Levitin_Parnovskii 2008} for smooth boundaries. On the other hand, the result admits an improvement. Replacing the upper bound by a variational estimate similar to that employed recently by Pankrashkin \cite{Pankrashkin 2013} for the principal eigenvalue, we obtain a bound in which only the maximum of the boundary curvature appears, and as a result, a two-term asymptotic expansion.

Let us now state the problem properly. We suppose that $\Omega$ be an open, simply connected set in $\mathbb{R}^2$ with a closed $C^4$ Jordan boundary $\partial \Omega=\Gamma:[0,L]\ni s\mapsto (\Gamma_1,\Gamma_2)\in\mathbb{R}^2$ which is parametrized by its arc length; for definiteness we choose the clockwise orientation of the boundary. Let $\gamma: [0,L]\rightarrow\mathbb{R}$ be the signed curvature of $\Gamma$, i.e. $\gamma(s)= \Gamma_1''(s)\Gamma_2'(s)-\Gamma_2''(s)\Gamma_1'(s)$. We investigate the spectral boundary-value problem
\begin{eqnarray}
-\Delta f=\lambda f\; &\textrm {in}& \Omega \nonumber \\ [-.5em]
&& \label{problem} \\ [-.5em]
\frac{\partial f}{\partial n}=\beta f\; &\textrm{on}& \Gamma \nonumber
\end{eqnarray}
with a parameter $\beta>0,$ which will be in the following assumed to be large; the symbol $\frac{\partial}{\partial n}$ in (\ref{problem}) denotes the outward normal derivative. It is straightforward to check that the quadratic form
\begin{equation}\label{differential form}
q_{\beta}[f] =\|\nabla f\|^2_{L^2(\Omega)}-\beta\int\limits_{\Gamma}|f(x)|^2\d s
\end{equation}
with $\mathrm{Dom}(q_{\beta}) = H^{1}(\Omega)$ is closed and below bounded; we denote by $H_{\beta}$ the unique self-adjoint operator associated with it. Our main goal is to study the asymptotic behavior of the negative eigenvalues of $H_{\beta}$ as parameter $\beta$ tends to infinity. To state the result, we introduce the one-dimensional Schr\"odinger operator
\begin{equation}\label{S}
S=-\frac{\d^2}{\d s^2}-\frac{1}{4}\gamma^2(s) \quad \textrm{in}\;\;
L^2(0,L)
\end{equation}
with the domain
\begin{equation}\label{P}
P=\left\{f\in H^2(0,L):\ f(0)=f(L),\; f'(0)=f'(L)\right\}.
\end{equation}
We use the symbol $\mu_j$ for the $j$-th eigenvalue of $S$ counted with the multiplicity, $j\in\mathbb{N}$, and furthermore, we denote
$\gamma^*=\max\limits_{[0,L]}\gamma(s)$ and $\gamma_*=\min\limits_{[0,L]}\gamma(s).$

Our main result reads then as follows.

\begin{theorem}\label{main theorem}
Under the stated assumptions, to any fixed integer $n$ there exists a $\beta(n)>0$ such that the number of negative eigenvalues of $H_{\beta}$ is not smaller than~$n.$ For $\beta >\beta(n)$ we denote by $\lambda_n(\beta)$ the $n$-th eigenvalue of $H_{\beta}$ counted with the multiplicity. Then $\lambda_n(\beta)$ satisfies for $\beta\to\infty$ the asymptotic estimates
\begin{equation}\label{estimate_lambda n}
-\(\beta+\frac{\gamma^*}{2}\)^2+\mu_n+ \OO\(\frac{\log\beta}{\beta}\) \le \lambda_n(\beta)
\le -\(\beta+\frac{\gamma_*}{2}\)^2+\mu_n+ \OO\(\frac{\log\beta}{\beta}\).
\end{equation}
\end{theorem}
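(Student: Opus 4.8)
The plan is to confine the analysis to a thin collar around $\Gamma$ by Dirichlet--Neumann bracketing, to pass to the natural boundary coordinates in which the operator nearly separates, and then to compare with a tensor sum $S\otimes I+I\otimes T_\beta$ where $T_\beta$ is a one-dimensional Robin operator on a short interval. Concretely, I would fix the collar $\Omega_a:=\{x\in\Omega:\mathrm{dist}(x,\Gamma)<a\}$ of width $a=a(\beta):=K\beta^{-1}\log\beta$, $K$ a large absolute constant; for $\beta$ large this $a$ is below the reach of the tubular neighbourhood of $\Gamma$, so $(s,t)\mapsto\Gamma(s)-t\,n(s)$ is a diffeomorphism of $[0,L)\times[0,a)$ onto $\Omega_a$ with Jacobian $1-t\gamma(s)$ and induced metric $(1-t\gamma)^2\,\d s^2+\d t^2$. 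Since the Neumann and the Dirichlet Laplacians on $\Omega\setminus\overline{\Omega_a}$ are nonnegative, bracketing across $\{t=a\}$ will give
\[
\lambda_n(H_\beta^{N,a})\le\lambda_n(H_\beta)\le\lambda_n(H_\beta^{D,a}),
\]
where $H_\beta^{N,a}$, $H_\beta^{D,a}$ are generated by $q_\beta$ restricted to $H^1(\Omega_a)$, respectively to $\{f\in H^1(\Omega_a):f|_{t=a}=0\}$; the estimates below will confirm a posteriori that these collar operators have at least $n$ negative eigenvalues.

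On the collar I would apply the unitary map $f\mapsto v:=(1-t\gamma)^{1/2}f$ of $L^2(\Omega_a)$ onto $L^2$ of the cylinder $[0,L)\times(0,a)$ with periodic identification in $s$. The metric being diagonal, a short computation --- expanding the gradient and integrating by parts once in $t$ --- turns the collar form into
\[
\int_0^L\!\!\int_0^a\!\Big[\frac{1}{(1-t\gamma)^2}\Big|\partial_s v+\frac{t\gamma'}{2(1-t\gamma)}v\Big|^2+|\partial_t v|^2-\frac{\gamma^2}{4(1-t\gamma)^2}|v|^2\Big]\,\d t\,\d s+\int_0^L\frac{\gamma(s)\,|v(s,a)|^2}{2(1-a\gamma(s))}\,\d s-\int_0^L\Big(\beta+\frac{\gamma(s)}{2}\Big)|v(s,0)|^2\,\d s,
\]
with the $t=a$ term absent in the Dirichlet case. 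Since $a\to0$, on the collar $1-t\gamma=1+\OO(a)$ and $t\gamma'=\OO(a)$ uniformly, so estimating the $\partial_s$ cross term by Cauchy--Schwarz with weight $\OO(a)$ shows, up to a two-sided form error $\OO(a)\|v\|^2$, that the collar form is squeezed between quadratic forms of operators $\mathcal A\otimes I+I\otimes\mathcal B$. Here $\mathcal A=\big(1+\OO(a)\big)\big(-\d^2/\d s^2\big)-\frac14\gamma^2$ on $L^2(0,L)$ with periodic conditions, whose $n$-th eigenvalue differs from $\mu_n$ by $\OO(a)$; and $\mathcal B=-\d^2/\d t^2$ on $L^2(0,a)$ with, at $t=0$, the Robin condition of parameter $\beta+\gamma^*/2$ for the lower estimate (where one uses $\gamma(s)\le\gamma^*$ and $\frac{\gamma}{2(1-a\gamma)}|v(s,a)|^2\ge-\|\gamma\|_\infty|v(s,a)|^2$) and $\beta+\gamma_*/2$ for the upper estimate (using $\gamma(s)\ge\gamma_*$ and that the endpoint term vanishes), together with, at $t=a$, an attractive Robin condition of bounded parameter in the first case and a Dirichlet condition in the second.

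It then remains to analyse the one-dimensional operator $\mathcal B$. Writing its eigenvalues as $-\kappa^2$ and solving the transcendental equation for $\kappa$, one finds that the lowest one equals $-(\beta+\gamma^*/2)^2$ (resp.\ $-(\beta+\gamma_*/2)^2$) up to $\OO(\beta^2\e^{-c\beta a})$, while the next eigenvalue is $\OO(1)$, so the spectral gap of $\mathcal B$ is of order $\beta^2$. Hence, for $\beta>\beta(n)$ large enough that the fixed number $\mu_n-\mu_1$ is dominated by this gap, the $n$-th eigenvalue of $\mathcal A\otimes I+I\otimes\mathcal B$ equals the lowest eigenvalue of $\mathcal B$ plus the $n$-th eigenvalue of $\mathcal A$, i.e.\ $-(\beta+\gamma^*/2)^2+\mu_n+\OO(a)+\OO(\beta^2\e^{-c\beta a})$ (resp.\ with $\gamma_*$). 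Inserting $a=K\beta^{-1}\log\beta$ makes $\OO(a)=\OO(\beta^{-1}\log\beta)$ and $\OO(\beta^2\e^{-c\beta a})=\OO(\beta^{2-cK})=o(\beta^{-1}\log\beta)$ for $K$ large, and the bracketing inequalities give \eqref{estimate_lambda n}; the upper estimate forces $\lambda_n(H_\beta)<0$, so $H_\beta$ has at least $n$ negative eigenvalues once $\beta$ is large.

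The step I expect to be the main obstacle is the endpoint $t=a$ in the lower bound: a crude trace estimate for $|v(s,a)|^2$ would cost a term of order $1/a$, so this term must instead be retained as a genuine (attractive) boundary condition of the model operator $\mathcal B$, and one must quantify that its influence on the ground state is only exponentially small. The ensuing competition between that exponential gain, which forces $\beta a\to\infty$, and the $\OO(a)$ curvature errors, which force $a\to0$, is exactly what pins the collar width at $a\sim\beta^{-1}\log\beta$ and yields the $\OO(\beta^{-1}\log\beta)$ remainder in \eqref{estimate_lambda n}; the rest is a uniform-in-$s$ bookkeeping of the change of variables and of the min--max principle.
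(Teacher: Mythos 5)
Your proposal is correct and follows essentially the same route as the paper: Dirichlet--Neumann bracketing onto a collar of width $a\sim\beta^{-1}\log\beta$, the unitary straightening $f\mapsto(1-t\gamma)^{1/2}f$, two-sided comparison with tensor sums of a periodic longitudinal Schr\"odinger operator (a perturbation of $S$ of size $\OO(a)$) and a one-dimensional transverse Robin operator, and the transcendental-equation analysis showing the transverse ground state is $-(\beta+\gamma^{*}/2)^2$ resp.\ $-(\beta+\gamma_*/2)^2$ up to exponentially small errors. Your treatment of the $t=a$ boundary term as a genuine attractive Robin condition in the lower bound, and the balance fixing $a$, match the paper's Lemmata 2.4--2.5 and its choice $a=6\beta^{-1}\log\beta$.
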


\begin{remarks}
{\rm (a) It will be clear from the proof that the assumption about simple connectedness of $\Omega$ is done mostly for the sake of simplicity. The result extends easily to multiply connected domains, in general with different parameters at different components of the boundary; each of the components then gives rise to a series of negative eigenvalues tending to $-\infty$ in the limit.

\smallskip

\noindent (b) In the light of the following result the upper bound in \eqref{estimate_lambda n} is not of much use. We include it primarily to illustrate the significant difference between the ``two-sided'' situation discussed in \cite{Exner_Yoshitomi 2002} and the ``one-sided'' one treated here.
}
\end{remarks}

\smallskip

As we have indicated, the upper bound can be improved:

\begin{theorem} \label{thm: better upper}
In the asymptotic regime $\beta\rightarrow+\infty$ the inequality
$$\lambda_n(\beta)\le-\beta^2-\gamma^*\beta+\OO\big(\beta^{2/3}\big)$$
is valid for any fixed $n$. Consequently, the $j$-th eigenvalue behaves asymptotically as
$$\lambda_n(\beta) = -\beta^2-\gamma^*\beta+\OO\big(\beta^{2/3}\big).$$
\end{theorem}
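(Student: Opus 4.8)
\emph{Proof plan.} By Theorem~\ref{main theorem}, for each fixed $n$ and all large $\beta$ the operator $H_\beta$ has at least $n$ negative eigenvalues, and since $-(\beta+\frac{\gamma^*}{2})^2+\mu_n+\OO(\log\beta/\beta)=-\beta^2-\gamma^*\beta+\OO(1)$ we already know $\lambda_n(\beta)\ge-\beta^2-\gamma^*\beta+\OO(1)$; it therefore suffices to prove the upper bound $\lambda_n(\beta)\le-\beta^2-\gamma^*\beta+\OO(\beta^{2/3})$, which then immediately yields the two-term asymptotics. For this I would use the min-max principle and, following the variational strategy of \cite{Pankrashkin 2013}, construct an $n$-dimensional subspace $V_n\subset H^1(\Omega)=\mathrm{Dom}(q_\beta)$ on which $q_\beta[f]\le(-\beta^2-\gamma^*\beta+\OO(\beta^{2/3}))\,\|f\|_{L^2(\Omega)}^2$. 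The trial functions should be concentrated in a boundary layer of thickness $\OO(\beta^{-1})$ and, tangentially, in a short arc around a point $s_0\in[0,L]$ where $\gamma(s_0)=\gamma^*$.

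Concretely, I would pass to boundary-normal coordinates $(s,t)$, with $s$ the arc length along $\Gamma$ and $t=\mathrm{dist}(\cdot,\Gamma)$; these parametrise a collar $\{0\le t<\delta_0\}$ for a purely geometric constant $\delta_0>0$, in which the metric is $\d t^2+(1-t\gamma(s))^2\,\d s^2$, so that for $f$ supported in the collar
\[
q_\beta[f]=\int_0^L\!\!\int_0^{\delta_0}\!\(\frac{|\partial_sf|^2}{1-t\gamma(s)}+(1-t\gamma(s))\,|\partial_tf|^2\)\,\d t\,\d s-\beta\int_0^L|f(s,0)|^2\,\d s ,
\]
\[
\|f\|_{L^2(\Omega)}^2=\int_0^L\!\!\int_0^{\delta_0}(1-t\gamma(s))\,|f|^2\,\d t\,\d s .
\]
As trial functions I would take $f_k(s,t)=\varphi_k(s)\,\e^{-\beta t}\,\chi(t)$, $k=1,\dots,n$, with $\chi$ a fixed cutoff equal to $1$ near $t=0$ and supported in $[0,\delta_0)$, and with $\varphi_1,\dots,\varphi_n$ bump functions having pairwise disjoint supports, each supported on a subinterval of length $h/n$ of the arc $(s_0-\frac h2,s_0+\frac h2)$, where $h=h(\beta)\to0$ is a tangential scale to be fixed at the end. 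Because the $\varphi_k$ are disjointly supported and the factor $\e^{-\beta t}\chi(t)$ is common to all of them, both $q_\beta$ and the $L^2(\Omega)$-norm are diagonal on $V_n=\mathrm{span}\{f_1,\dots,f_n\}$, so it suffices to estimate the Rayleigh quotient of each $f_k$.

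The $t$-integrals then reduce to elementary integrals against $\e^{-2\beta t}$ (the cutoff $\chi$ contributing only $\OO(\e^{-c\beta})$); keeping track of all the terms — in particular of the fact that the weight $(1-t\gamma)$ in the \emph{denominator}, expanded against the leading $-\beta^2$, supplies exactly one half of the $\beta$-linear term, the other half coming from the $t$-energy in the numerator — one obtains
\[
\frac{q_\beta[f_k]}{\|f_k\|_{L^2(\Omega)}^2}=-\beta^2-\beta\,\frac{\int_0^L\gamma\,|\varphi_k|^2\,\d s}{\int_0^L|\varphi_k|^2\,\d s}+\frac{\int_0^L|\varphi_k'|^2\,\d s}{\int_0^L|\varphi_k|^2\,\d s}+\OO(1)+\OO\!\(\beta^{-1}\,\frac{\int_0^L|\varphi_k'|^2\,\d s}{\int_0^L|\varphi_k|^2\,\d s}\).
\]
Since $\gamma\in C^2$ (as $\Gamma\in C^4$) attains its maximum at $s_0$, Taylor's formula gives $\gamma(s)\ge\gamma^*-Ch^2$ on $\mathrm{supp}\,\varphi_k$, hence the second term is $\le-\gamma^*\beta+C\beta h^2$, while a bump on an interval of length $h/n$ satisfies $\int_0^L|\varphi_k'|^2\,\d s\le C(n/h)^2\int_0^L|\varphi_k|^2\,\d s$. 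Therefore
\[
\lambda_n(\beta)\le\max_{1\le k\le n}\frac{q_\beta[f_k]}{\|f_k\|_{L^2(\Omega)}^2}\le-\beta^2-\gamma^*\beta+C\big(\beta h^2+n^2h^{-2}\big)+\OO(1),
\]
and for fixed $n$ any admissible choice of $h$ — e.g.\ $h\asymp\beta^{-1/3}$, for which $\beta h^2\asymp\beta^{1/3}$ and $n^2h^{-2}\asymp n^2\beta^{2/3}$ — gives the claimed remainder $\OO(\beta^{2/3})$.

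The step I expect to be the real obstacle is the uniform bookkeeping of the error terms in this separated representation, and above all getting the coefficient of $\beta$ exactly right: dropping the denominator contribution would produce the spurious value $-\gamma^*/2$, so the expansion in powers of $\beta^{-1}$ has to be carried out with care, and one must check that every remainder is genuinely of lower order once $h$ is fixed and that the diagonalisation of $q_\beta$ over $V_n$ holds exactly. It should also be noted that the argument uses only the $C^2$-regularity of $\gamma$ and the mere attainment of $\gamma^*$; this is why it gives the non-optimal exponent $2/3$ rather than the $1/2$ that a harmonic-oscillator analysis near a non-degenerate maximum would yield — but for matching the lower bound of Theorem~\ref{main theorem} the cruder exponent is all that is needed.
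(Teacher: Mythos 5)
Your proposal is correct and follows the same basic strategy as the paper: a variational upper bound via product trial functions with an exponential boundary-layer profile in the normal variable, a bump of width $\sim\beta^{-1/3}$ in the arc-length variable centred at a point of maximal curvature, and disjointly supported shifted bumps together with the min-max principle for the higher eigenvalues. The one genuine difference is the route to the Rayleigh quotient: the paper first applies Dirichlet bracketing and the unitary ``straightening'' transformation, tests the resulting form $b^D_{a,\beta}$ with $\hat\varphi(s,u)=\chi_\varepsilon(s)(\e^{-\alpha u}-\e^{-2a\alpha+u\alpha})$, and obtains the full $-(\beta+\gamma^*/2)^2$ directly from the transformed boundary term $-\int(\frac{\gamma}{2}+\beta)|\varphi(s,0)|^2\,\d s$ after choosing $\alpha=\beta+\gamma^*/2$; you instead test $q_\beta$ itself in curvilinear coordinates, so the $-\gamma^*\beta$ term has to be assembled from the metric weights $(1-t\gamma)^{\pm1}$ in the numerator and denominator. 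I checked your bookkeeping and it is right: the transverse kinetic term contributes $-\beta G/(2N)$ and the expansion of the weighted $L^2$-norm contributes the other $-\beta G/(2N)$, giving the full $-\beta\int\gamma|\varphi_k|^2/\int|\varphi_k|^2$; as you note, omitting the denominator's share would spuriously halve the coefficient. Your version is marginally more self-contained (no appeal to Lemma~\ref{l: uniteq}), while the paper's reuses machinery already set up for Theorem~\ref{main theorem}; your observation that a quadratic Taylor bound at the maximum would improve the remainder to $\OO(\beta^{1/2})$ is also sound, though not needed for the stated result.
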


\medskip

\noindent Thus we obtain a two-term asymptotics which, in contrast to the Schr\"odinger operator case treated in \cite{Exner_Yoshitomi 2002}, is not precise enough to distinguish between individual eigenvalues whose mutual distances are expected to be of order of $\OO(1)$.

\section{Proof of Theorem \ref{main theorem}}
Let us first introduce some quadratic forms and operators which we shall need in the argument. To begin with, we need the following result, which is a straightforward modification of Lemma~2.1 of \cite{Exner_Yoshitomi 2002}, hence we skip the proof.

\begin{lemma}
Let $\Phi$ be the map
\[
[0,L)\times(0,a)\ni(s,u)\mapsto(\Gamma_1(s)+u\Gamma_2'(s),\Gamma_2(s)-u\Gamma_1'(s))\in\mathbb{R}^2.
\]
Then there exists an $a_1>0$ such that the map $\Phi$ is injective for any $a\in(0,a_1].$
\end{lemma}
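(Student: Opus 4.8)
\emph{Proof strategy.} The plan is to read $\Phi$ as the map into ``parallel coordinates'' around $\Gamma$. Writing $\nu(s):=(\Gamma_2'(s),-\Gamma_1'(s))$ for the unit normal pointing into $\Omega$ --- this is the inward direction precisely because $\Gamma$ was taken with clockwise orientation --- one has $\Phi(s,u)=\Gamma(s)+u\,\nu(s)$, and since $\Gamma\in C^4$ both $\nu$ and $\Phi$ are of class $C^3$. It is convenient to view the arc-length variable as a point of the circle $\R/L\Z$, so that the closedness of $\Gamma$ is built in. The first step is a short Frenet computation: from $|\Gamma'|\equiv1$ and the given formula for $\gamma$ one gets $\Gamma''=\gamma\,\nu$ and $\nu'=-\gamma\,\Gamma'$, hence $\partial_s\Phi(s,u)=(1-u\gamma(s))\,\Gamma'(s)$ and $\partial_u\Phi(s,u)=\nu(s)$; these vectors are orthogonal with lengths $|1-u\gamma(s)|$ and $1$, so $|\det D\Phi(s,u)|=|1-u\gamma(s)|$.

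Next I would fix $a_0:=\big(1+\max_{[0,L]}|\gamma|\big)^{-1}$, so that $1-u\gamma(s)>0$ for all $s$ whenever $|u|<a_0$; then $D\Phi$ is invertible everywhere on the two-sided cylinder $C_0:=(\R/L\Z)\times(-a_0,a_0)$, and $\Phi|_{C_0}$ is a $C^3$ local diffeomorphism. By the inverse function theorem, each $s_*$ has an open neighbourhood $U_{s_*}\subset C_0$ of $(s_*,0)$ on which $\Phi$ is injective.

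The remaining, and only genuinely delicate, step is to upgrade local injectivity to injectivity on a full one-sided strip $(\R/L\Z)\times(0,a_1)$; I would do this by contradiction, using compactness of the circle. If no admissible $a_1\le a_0$ existed, then for every $n$ there would be $(s_n,u_n)\ne(\tilde s_n,\tilde u_n)$ in $(\R/L\Z)\times(0,1/n)$ with $\Phi(s_n,u_n)=\Phi(\tilde s_n,\tilde u_n)$; along a subsequence $s_n\to s_*$, $\tilde s_n\to\tilde s_*$ and $u_n,\tilde u_n\to0$, so continuity gives $\Gamma(s_*)=\Gamma(\tilde s_*)$, and since $\Gamma$ is a Jordan curve this forces $s_*=\tilde s_*$. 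Then both $(s_n,u_n)$ and $(\tilde s_n,\tilde u_n)$ lie in $U_{s_*}$ for large $n$, contradicting injectivity there. Hence some $a_1\in(0,a_0]$ works, and since the strip shrinks with $a$ and passing from the circle back to the representative interval $[0,L)$ changes nothing, the claim follows for all $a\in(0,a_1]$. The crucial point of the argument --- and where I expect the only real care to be needed --- is this globalisation: one must take the inverse-function-theorem neighbourhoods inside the \emph{two-sided} cylinder (allowing $u\le 0$), so that collision sequences approaching the boundary $u=0$ are actually caught; everything else is the routine Frenet algebra recorded above.
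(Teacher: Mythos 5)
Your proof is correct: the Frenet computation giving $|\det D\Phi|=|1-u\gamma(s)|$, the choice of $a_0$, and the compactness/contradiction argument (carried out on the two-sided cylinder so that the inverse function theorem can be applied at points with $u=0$) together establish the claim. The paper itself omits the proof, deferring to Lemma~2.1 of Exner--Yoshitomi, and your argument is precisely the standard tubular-neighbourhood reasoning that reference relies on, so there is nothing to add.
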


Choose an $a$ satisfying $0<a<a_1$, to be specified later, and let $\Sigma_a$ be the strip neighborhood of $\Gamma$ of width $a$,
\[
\Sigma_a:=\Phi([0,L)\times[0,a)).
\]
Then $\Omega\setminus\Sigma_a=\Lambda_a$ is a compact simply connected domain with the boundary $\Gamma_a:= \Phi([0,L)\times\{a\})$. We define
\begin{eqnarray*}
&& q_{a,\beta}^D[f]:=\|\nabla f\|^2_{\Sigma_a}-\beta\int\limits_{\Gamma}|f(x)|^2\d s\quad\textrm{ for }\: f\in\left\{f\in H^1(\Sigma_a): f|_{\Gamma_a}=0\right\}, \\ &&
q_{a,\beta}^N[f]:=\|\nabla f\|^2_{\Sigma_a}-\beta\int\limits_{\Gamma}|f(x)|^2\d s\quad\textrm{ for }\: f\in H^1(\Sigma_a),
\end{eqnarray*}
and denote by $L_{a,\beta}^D$ and $L_{a,\beta}^N$ the self-adjoint operators associated with the forms $q_{a,\beta}^D$ and $q_{a,\beta}^N,$ respectively. The first key component of the proof is to use the Dirichlet-Neumann bracketing -- see \cite[Sec.~XIII.15, Prop.~4]{Reed Simon 4 1978} -- imposing additional boundary condition at $\Gamma_a$. This yields
\begin{equation}\label{inequalities H_beta}
(-\Delta^N_{\Lambda_a})\oplus L_{a,\beta}^N\leq H_{\beta}\leq
(-\Delta^D_{\Lambda_a})\oplus L_{a,\beta}^D
\end{equation}
in $L^2(\Omega)= L^2(\Lambda_a)\oplus L^2(\Sigma_a)$ where the inequality should be understood, of course, in the variational sense. Since the estimating operators have the direct-sum structure and the first terms in the inequalities (\ref{inequalities H_beta}) referring to the part of $\Omega$ separated from the boundary are positive, in order to estimate the negative eigenvalues of $H_{\beta}$ it is sufficient to estimate those of $L^D_{a,\beta},$ $L^N_{a,\beta}$.

To this aim we use the second main trick, introducing a ``straightening'' transformation in the spirit of \cite{Exner_Seba 1989}, to produce a pair of
operators in $L^2((0,L)\times(0,a))$ that are unitarily equivalent to $L_{a,\beta}^D$ and $L_{a,\beta}^N,$ respectively. Specifically, we introduce the following change of variables,
$$
f(x_1,x_2)=\frac{1}{(1-u\gamma(s))^{1/2}}\ \varphi(s,u);
$$
then it is straightforward to check that for any function $f\in H^2(\Sigma_a)$ we have also $\varphi\in H^2((0,L)\times(0,a))$ and
\begin{eqnarray*}
\lefteqn{|f_{x_1}|^2+|f_{x_2}|^2 =\left[\frac{1}{(1-u\gamma(s))^2}\left|\frac{\partial\varphi}{\partial s}\right|^2+
\left|\frac{\partial\varphi}{\partial u}\right|^2 +\widetilde
V(s,u)|\varphi|^2\right.} \\ && \left.+\frac{u\gamma'(s)}{2(1-u\gamma(s))^3}\(\varphi\
\overline{\frac{\partial\varphi} {\partial s}}+\overline{\varphi}\
\frac{\partial\varphi} {\partial
s}\)+\frac{\gamma(s)}{2(1-u\gamma(s))}\(\varphi\
\overline{\frac{\partial\varphi} {\partial u}}+\overline{\varphi}\
\frac{\partial\varphi} {\partial u}\)\right]
\end{eqnarray*}
with
\[
\widetilde V(s,u)=\frac{\gamma^2(s)}{4(1-u\gamma(s))^2}+\frac{u^2(\gamma'(s))^2}{4(1-u\gamma(s))^4}\,,
\]
where we employ the usual shorthands, $f_{x_j}= \frac{\partial f}{\partial x_j}$, and furthermore
\begin{eqnarray*}
&& \qquad\qquad \iint\limits_{\Sigma_a}\(|f_{x_1}|^2+|f_{x_2}|^2\)\d x_1\d
x_2-\beta\int\limits_{\Gamma}|f(x)|^2\d s \\ && =\iint\limits_{0\
0}^{\ \ \ a\ L}
\frac{1}{(1-u\gamma(s))^2}\left|\frac{\partial\varphi}{\partial
s}\right|^2\d s\,\d u+ \iint\limits_{0\ 0}^{\ \ \ a\ L}
\left|\frac{\partial\varphi}{\partial u}\right|^2\d s\,\d u +
\iint\limits_{0\ 0}^{\ \ \ a\ L}V(s,u) \left|\varphi\right|^2\d
s\,\d u \\ && \qquad -\int\limits_0^L\(\frac{\gamma(s)}{2}+\beta\)|\varphi(s,0)|^2\d
s+\int\limits_0^L\frac{\gamma(s)}{2(1-a\gamma(s))}|\varphi(s,a)|^2\d s,
\end{eqnarray*}
where
\begin{eqnarray*}
\lefteqn{V(s,u)=\widetilde
V(s,u)-\frac{\partial}{\partial
s}\(\frac{u\gamma'(s)}{2(1-u\gamma(s))^3}\)-\frac{\partial}{\partial
u}\(\frac{\gamma(s)}{2(1-u\gamma(s))}\)} \\ &&
=-\dsfrac{\gamma^2(s)}{4(1-u\gamma(s))^2}-\frac{u\gamma''(s)}{2(1-u\gamma(s))^3}
-\frac{5}{4}\frac{u^2(\gamma'(s))^2}{(1-u\gamma(s))^4}.
\end{eqnarray*}
Armed with these formul{\ae} we can now introduce the two operators in $L^2((0,L)\times(0,a))$ unitarily equivalent to $L_{a,\beta}^D$ and $L_{a,\beta}^N,$ respectively. On the domains
\[
Q_a^D=\left\{\varphi\in H^1((0,L)\times(0,a)):\, \varphi(L,.)=\varphi(0,.)\:\textrm{ on } (0,a),\:
\varphi(.,a)=0\:\textrm{ on } (0,L)\right\}
\]
and
\[
Q_a^N=\left\{\varphi\in H^1((0,L)\times(0,a)):\quad \varphi(L,.)=\varphi(0,.)\:\textrm{ on } (0,a)\right\},
\]
we define the quadratic forms
\begin{eqnarray}\nonumber
\lefteqn{b_{a,\beta}^D[\varphi]= \iint\limits_{0\ 0}^{\ \ \ a\ L}
\frac{1}{(1-u\gamma(s))^2}\left|\frac{\partial\varphi}{\partial
s}\right|^2\d s\,\d u+ \iint\limits_{0\ 0}^{\ \ \ a\ L}
\left|\frac{\partial\varphi}{\partial u}\right|^2\d s\,\d u} \\ &&
+\iint\limits_{0\ 0}^{\ \ \ a\ L}V(s,u) \left|\varphi\right|^2\d
s\,\d
u-\int\limits_0^L\(\frac{\gamma(s)}{2}+\beta\)|\varphi(s,0)|^2\d s
\label{bD}\end{eqnarray}
and
\begin{eqnarray*}
\lefteqn{b_{a,\beta}^N[\varphi]= \iint\limits_{0\ 0}^{\ \ \ a\
L} \frac{1}{(1-u\gamma(s))^2}\left|\frac{\partial\varphi}{\partial
s}\right|^2\d s\,\d u+ \iint\limits_{0\ 0}^{\ \ \ a\ L}
\left|\frac{\partial\varphi}{\partial u}\right|^2\d s\,\d u} \\ &&
+ \iint\limits_{0\ 0}^{\ \ \ a\ L}V(s,u) \left|\varphi\right|^2\d s\,\d u
-\int\limits_0^L\(\frac{\gamma(s)}{2}+\beta\)|\varphi(s,0)|^2\d s \\ &&
+\int\limits_0^L\frac{\gamma(s)}{2(1-a\gamma(s))}|\varphi(s,a)|^2\d s,
\end{eqnarray*}
respectively. It is easy to check the following claim analogous to Lemma~2.2
of \cite{Exner_Yoshitomi 2002}.

\begin{lemma} \label{l: uniteq}
The operators $B_{a,\beta}^D$ and $B_{a,\beta}^N$ associated with above quadratic forms are unitarily
equivalent to $L_{a,\beta}^D$ and $L_{a,\beta}^N,$ respectively.
\end{lemma}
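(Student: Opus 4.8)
The plan is to follow the scheme of Lemma~2.1--2.2 of \cite{Exner_Yoshitomi 2002}, exhibiting an explicit unitary that intertwines the forms and keeping track of the one extra boundary term produced by the Robin condition. First I would write down the unitary map. For $f\in L^2(\Sigma_a)$ put
\[
(Uf)(s,u)=(1-u\gamma(s))^{1/2}\,f(\Phi(s,u)),\qquad (s,u)\in(0,L)\times(0,a).
\]
By the preceding lemma $\Phi$ is a bijection of $[0,L)\times(0,a)$ onto $\Sigma_a\setminus\Gamma$, and (using $(\Gamma_1')^2+(\Gamma_2')^2=1$ and the definition of $\gamma$) the area element transforms as $\d x_1\,\d x_2=(1-u\gamma(s))\,\d s\,\d u$; shrinking $a$ so that $1-u\gamma(s)\ge c>0$ on the closed rectangle, the change-of-variables formula gives $\|Uf\|_{L^2((0,L)\times(0,a))}=\|f\|_{L^2(\Sigma_a)}$, and the inverse of $U$ has the same weighted-composition form, so $U$ is unitary.

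Next I would identify the images of the form domains. Since $\Gamma\in C^4$ is a closed Jordan curve, $\Phi$ is a $C^3$ diffeomorphism that is $L$-periodic in $s$ with non-vanishing Jacobian on the closed rectangle, while $(1-u\gamma)^{\pm 1/2}$ are $C^2$ and bounded; hence composition with $\Phi$ and multiplication by the weight are bounded in both directions between the relevant $H^1$ spaces, so $U(H^1(\Sigma_a))=H^1((0,L)\times(0,a))$. The seam condition $\varphi(0,\cdot)=\varphi(L,\cdot)$ is automatic from $\Phi(0,u)=\Phi(L,u)$; the trace $f|_{\Gamma_a}=0$ turns into $\varphi(\cdot,a)=0$ because $\Gamma_a=\Phi([0,L)\times\{a\})$ and the weight does not vanish there; nothing is imposed at $u=0$. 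Thus $U$ carries the form domain of $L^D_{a,\beta}$ onto $Q^D_a$ and that of $L^N_{a,\beta}$ onto $Q^N_a$.

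It remains to verify $q^D_{a,\beta}[f]=b^D_{a,\beta}[Uf]$ and $q^N_{a,\beta}[f]=b^N_{a,\beta}[Uf]$. For $f\in H^2(\Sigma_a)$ this is exactly the computation already displayed above: $|f_{x_1}|^2+|f_{x_2}|^2$ decomposes into the anisotropic gradient terms, the potential $\widetilde V|\varphi|^2$, and two families of cross terms; integrating the $\partial_s$-cross term by parts in $s$ (the $L$-periodicity kills the endpoint terms) and the $\partial_u$-cross term by parts in $u$ converts them into the effective potential $V$ together with boundary integrals at $u=0$ and $u=a$. The $u=0$ integral merges with $-\beta\int_\Gamma|f|^2\d s$ to give $-\int_0^L\big(\frac{\gamma(s)}{2}+\beta\big)|\varphi(s,0)|^2\d s$ --- this, and only this, is where the present computation differs from \cite{Exner_Yoshitomi 2002} --- while the $u=a$ integral vanishes in the Dirichlet form thanks to $\varphi(\cdot,a)=0$ and reproduces $\int_0^L\frac{\gamma(s)}{2(1-a\gamma(s))}|\varphi(s,a)|^2\d s$ in the Neumann one. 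Since the $H^2$ functions in each form domain are dense there in the form norm, and both sides are continuous in the $H^1$ norm, the identity extends to the full form domain. Finally, the forms $q^{D/N}_{a,\beta}$ are closed and bounded below (the boundary terms are relatively form-bounded by the trace inequality), these properties pass through the unitary $U$, and a form identity intertwined by a unitary operator yields $B^{D/N}_{a,\beta}=U\,L^{D/N}_{a,\beta}\,U^{-1}$, which is the assertion.

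The only step that is more than bookkeeping is the regularity claim in the second paragraph: one must be sure the weighted pull-back preserves $H^1$ globally on $\Sigma_a$ --- across the seam $\Phi(\{0\}\times(0,a))$ and up to both $\Gamma$ and $\Gamma_a$ --- so that the form domains correspond exactly, and that the integrations by parts used above are legitimate on the dense subspace. Both follow from the smoothness of $\Phi$ and of the weight on the closed rectangle, which is available once $a<a_1$ is chosen small enough; everything else is the direct calculation already carried out in the text, whence the lemma is "easy to check" in the sense claimed.
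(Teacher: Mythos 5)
Your proposal is correct and follows exactly the route the paper intends: the weighted composition operator $U f=(1-u\gamma)^{1/2}\,f\circ\Phi$ is the unitary, the displayed pointwise and integrated identities (with the cross terms integrated by parts to produce $V$ and the boundary terms at $u=0$ and $u=a$) give the form equality on the $H^2$ core, and density plus the first representation theorem finish the argument. The paper leaves this as ``easy to check'' precisely because the computation you invoke is already written out in the text immediately before the lemma, so there is nothing to add.
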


In the next step we estimate $B_{a,\beta}^D$ and $B_{a,\beta}^N$ just introduced by operators with separated variables. We put\footnote{There is a typo in \cite{Exner_Yoshitomi 2002}; the second term in the definition of $V_+$ there has to be deleted.}
\begin{eqnarray*}
&& \gamma_+=\max\limits_{[0,L]}|\gamma(.)|, \quad \gamma'_+=\max\limits_{[0,L]}|\gamma'(.)|,\quad \gamma''_+=\max\limits_{[0,L]}|\gamma''(.)|, \\
&& V_+(s)=\frac{-\gamma^2(s)}{4(1+a\gamma_+)^2}+\frac{a\gamma''_+}{2(1-a\gamma_+)^3}\ , \\
&& V_-(s)=\frac{-\gamma^2(s)}{4(1-a\gamma_+)^2}-\frac{a\gamma''_+}{2(1-a\gamma_+)^3}-\frac{5}{4}\frac{a^2(\gamma'_+)^2}{(1-a\gamma_+)^4}.
\end{eqnarray*}
For an $a$ satisfying $0<a<\gamma_+/2$ and $\varphi$ belonging to $Q_a^D$ and $Q_a^N$, respectively, we define
\begin{eqnarray*}
\lefteqn{\widetilde b_{a,\beta}^D[\varphi] =
(1-a\gamma_+)^{-2}\iint\limits_{0\ 0}^{\ \ \ a\ L}
\left|\frac{\partial\varphi}{\partial s}\right|^2\d s\,\d u+
\iint\limits_{0\ 0}^{\ \ \ a\ L}
\left|\frac{\partial\varphi}{\partial u}\right|^2\d s\,\d u} \\ && +
\iint\limits_{0\ 0}^{\ \ \ a\ L}V_+(s) \left|\varphi\right|^2\d
s\,\d u-\(\frac{\gamma_*}{2}+\beta\)\int\limits_0^L|\varphi(s,0)|^2\d s
\end{eqnarray*}
and
\begin{eqnarray*}
\lefteqn{\widetilde b_{a,\beta}^N[\varphi]=
(1+a\gamma_+)^{-2}\iint\limits_{0\ 0}^{\ \ \ a\ L}
\left|\frac{\partial\varphi}{\partial s}\right|^2\d s\,\d u+
\iint\limits_{0\ 0}^{\ \ \ a\ L}
\left|\frac{\partial\varphi}{\partial u}\right|^2\d s\,\d u} \\ && +
\iint\limits_{0\ 0}^{\ \ \ a\ L}V_-(s) \left|\varphi\right|^2\d
s\,\d u-\(\frac{\gamma^*}{2}+\beta\)\int\limits_0^L|\varphi(s,0)|^2\d
s-\frac{\gamma_+}{2(1-a\gamma_+)}\int\limits_0^L|\varphi(s,a)|^2\d s.
\end{eqnarray*}
Then we have
\begin{equation}\label{inequality b^D}
b_{a,\beta}^D[\varphi]\leq \widetilde
b_{a,\beta}^D[\varphi]\quad \textrm{ for }\,f\in Q_a^D,
\end{equation}
\begin{equation}\label{inequality b^N}
b_{a,\beta}^N[\varphi]\geq \widetilde
b_{a,\beta}^N[\varphi]\quad \textrm{ for }\,f\in Q_a^N.
\end{equation}
Let $\widetilde H_{a,\beta}^D$ and $\widetilde H_{a,\beta}^N$ be the self-adjoint operators associated with the forms $\widetilde b_{a,\beta}^D$ and $\widetilde b_{a,\beta}^N$, respectively. By $T_{a,\beta}^D$ we denote the self-adjoint operator associated with the form
\[
t_{a,\beta}^D[\varphi]=\int\limits_0^a|\varphi'(u)|^2\d u-\(\frac{\gamma_*}{2}+\beta\)|\varphi(0)|^2
\]
defined on $\{\varphi\in H^1(0,a):\: \varphi(a)=0\}$. Similarly, $T_{a,\beta}^N$ is the self-adjoint operator associated with the form
\[
t_{a,\beta}^N(\varphi,\varphi)=\int\limits_0^a|\varphi'(u)|^2\d u-\(\frac{\gamma^*}{2}+\beta\)|\varphi(0)|^2, \quad \varphi\in  H^1(0,a).
\]
Furthermore, we introduce the operators
\[
U_a^D=(1-a\gamma_+)^{-2}\(-\,\frac{\d^2}{\d s^2}\)+V_+(s)\,, \quad U_a^N=(1+a\gamma_+)^{-2}\(-\,\frac{\d^2}{\d s^2}\)+V_-(s)
\]
in $L^2(0,L)$, the domain of both of them being $P$ given by (\ref{P}). Then we have
\begin{equation}\label{H tilde as sum}
\widetilde H_{a,\beta}^D=U_a^D\otimes I+I\otimes T_{a,\beta}^D,
\qquad \widetilde H_{a,\beta}^N=U_a^N\otimes I+I\otimes
T_{a,\beta}^N,
\end{equation}
and we can estimate contributions from the longitudinal and transverse variables separately. What concerns the former, we denote by $\mu_j^D(a),$ $\mu_j^N(a)$ the $j$-th eigenvalue of $U_a^D,$ $U_a^N,$ respectively, counted with the multiplicity, and use Proposition~2.3 of \cite{Exner_Yoshitomi 2002} which contains the following claim:

\begin{lemma} \label{longlemma}
There exists a constant $C>0$ such that the estimates
\begin{equation}\label{estimate mu_j^D}
|\mu_j^D(a)-\mu_j|\leq C aj^2
\end{equation}
and
\begin{equation}\label{estimate mu_j^N}
|\mu_j^N(a)-\mu_j|\leq C aj^2
\end{equation}
hold for any $j\in\mathbb{N}$ and $0<a<1/(2\gamma_+).$ where $C$ is
independent on $j,a.$
\end{lemma}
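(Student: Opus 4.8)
The plan is to compare $U_a^D$ and $U_a^N$ with the limiting operator $S$ by absorbing their $s$-derivative parts into rescaled copies of $S$, so that each difference reduces to a bounded multiplication operator of size $\OO(a)$; the bound then follows from the min--max principle once one knows that $\mu_j=\OO(j^2)$. That growth rate itself comes from bracketing: since $0\le\gamma^2(s)\le\gamma_+^2$, on the common form domain $\{f\in H^1(0,L):f(0)=f(L)\}$ one has $-\frac{\d^2}{\d s^2}-\frac14\gamma_+^2\le S\le-\frac{\d^2}{\d s^2}$, and the periodic Laplacian $-\frac{\d^2}{\d s^2}$ has the explicit eigenvalues $(2\pi k/L)^2$, $k\in\Z$ (each positive one doubly degenerate), so its $j$-th eigenvalue lies between $(\pi(j-1)/L)^2$ and $(2\pi j/L)^2$; by min--max the same is true for $\mu_j$ up to the additive constant $\frac14\gamma_+^2$, giving $|\mu_j|\le Cj^2$ with $C$ depending only on $L$ and $\gamma_+$.

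Next, using the identity $-\frac{\d^2}{\d s^2}=S+\frac14\gamma^2$ together with the explicit forms of $V_+$ and $V_-$, I would rewrite
\[
U_a^D=(1-a\gamma_+)^{-2}S+R_a^D,\qquad U_a^N=(1+a\gamma_+)^{-2}S+R_a^N,
\]
where, for instance,
\[
R_a^D=\frac{\gamma^2(s)}{4}\Big[(1-a\gamma_+)^{-2}-(1+a\gamma_+)^{-2}\Big]+\frac{a\gamma''_+}{2(1-a\gamma_+)^3},
\]
and $R_a^N$ is an entirely analogous bounded multiplication operator. For $0<a<1/(2\gamma_+)$ all denominators stay bounded away from $0$, and since $\Gamma\in C^4$ the numbers $\gamma_+,\gamma'_+,\gamma''_+$ are finite, so $R_a^D$ and $R_a^N$ are smooth in $a$ on $[0,1/(2\gamma_+)]$ and vanish at $a=0$; consequently $\|R_a^D\|_\infty+\|R_a^N\|_\infty\le Ca$ uniformly in $a$ and $s$.

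Since $(1\mp a\gamma_+)^{-2}S$ shares its eigenfunctions with $S$, its $j$-th eigenvalue is exactly $(1\mp a\gamma_+)^{-2}\mu_j$, and adding the bounded perturbation $R_a^{D/N}$ moves every eigenvalue by at most $\|R_a^{D/N}\|_\infty$ by the min--max principle; hence $|\mu_j^D(a)-(1-a\gamma_+)^{-2}\mu_j|\le Ca$ and $|\mu_j^N(a)-(1+a\gamma_+)^{-2}\mu_j|\le Ca$. Combining this with $|(1\mp a\gamma_+)^{-2}-1|\le Ca$ (valid for $0<a<1/(2\gamma_+)$) and with $|\mu_j|\le Cj^2$ yields
\[
\big|\mu_j^{D/N}(a)-\mu_j\big|\le Ca+\big|(1\mp a\gamma_+)^{-2}-1\big|\,|\mu_j|\le Ca+Ca\cdot Cj^2\le C'aj^2,
\]
which is \eqref{estimate mu_j^D}--\eqref{estimate mu_j^N}.

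The routine ingredients are the Taylor estimate for the bounded rational functions defining $R_a^{D/N}$ (using only boundedness of $\gamma''$, i.e.\ $C^4$ regularity of $\Gamma$) and the explicit spectrum of the periodic Laplacian. The one conceptual point is the rescaling in the second step: the naive difference $U_a^{D/N}-S$ contains the unbounded operator $\big((1\mp a\gamma_+)^{-2}-1\big)\big(-\frac{\d^2}{\d s^2}\big)$, which cannot be controlled by $\|f\|$ alone, but writing the second derivative back through $S$ itself turns this term into a mere dilation of the spectrum, and that dilation is precisely what produces the $j^2$ factor in the final bound.
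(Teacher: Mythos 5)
Your argument is correct: rewriting $U_a^{D/N}$ as $(1\mp a\gamma_+)^{-2}S$ plus a multiplication operator of sup-norm $\OO(a)$, and then converting the spectral dilation into the $j^2$ factor via the bracketing bound $|\mu_j|\le Cj^2$ against the periodic Laplacian, is exactly the mechanism behind this estimate. The paper itself gives no proof but simply quotes Proposition~2.3 of Exner--Yoshitomi, whose argument is of the same type, so your proposal is a correct and essentially equivalent self-contained derivation.
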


\noindent We stress that the constant $C$ here is independent of $j$ and $a$. As for the transverse part, let us estimate first the principal eigenvalue of $T_{a,\beta}^D$.

\begin{lemma}\label{lemma T D eigenvalue}
Assume that
$a\(\beta+\frac{\gamma_*}{2}\)>\dsfrac{4}{3}.$ Then
$T_{a,\beta}^D$ has only one negative eigenvalue which we denote
by $\zeta_{a,\beta}^D.$ It satisfies the inequalities
\[
-\(\beta+\frac{\gamma_*}{2}\)^2\leq\zeta_{a,\beta}^D\leq -\(\beta+\frac{\gamma_*}{2}\)^2+4\(\beta+\frac{\gamma_*}{2}\)^2
\e^{-a\(\beta+\frac{\gamma_*}{2}\)}.
\]
\end{lemma}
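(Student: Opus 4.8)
The plan is to treat $T_{a,\beta}^D$ as an explicit regular Sturm--Liouville operator, to read off the bottom of its spectrum from a transcendental equation, and then to get the two-sided estimate from an elementary monotonicity argument together with one well-chosen trial function. Put $\kappa:=\beta+\frac{\gamma_*}{2}$; since $a>0$, the hypothesis $a(\beta+\frac{\gamma_*}{2})>\frac43$ forces $\kappa>0$ and reads $a\kappa>\frac43$. Integrating by parts in $t_{a,\beta}^D$ identifies $T_{a,\beta}^D$ as $-\frac{\d^2}{\d u^2}$ on $(0,a)$ with the Dirichlet condition $\varphi(a)=0$ and the Robin condition $\varphi'(0)=-\kappa\,\varphi(0)$; this being a regular problem, the spectrum is discrete and bounded below.

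First I would count the negative eigenvalues. A number $\zeta=-k^2$ with $k>0$ is an eigenvalue precisely when the function $\varphi(u)=\sinh(k(a-u))$ --- the unique, up to a multiplicative constant, solution of $-\varphi''=\zeta\varphi$ that vanishes at $u=a$ --- satisfies the Robin condition at $u=0$, i.e. when $k\coth(ak)=\kappa$. The map $k\mapsto k\coth(ak)$ is continuous and strictly increasing on $(0,\infty)$ --- its derivative equals $\big(\frac12\sinh(2ak)-ak\big)/\sinh^2(ak)>0$ --- with limit $1/a$ as $k\to0+$ and $+\infty$ as $k\to\infty$; hence $k\coth(ak)=\kappa$ has a (necessarily simple) solution $k_0$ if and only if $\kappa>1/a$, which is guaranteed by $a\kappa>\frac43$. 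Therefore $T_{a,\beta}^D$ has exactly one negative eigenvalue $\zeta_{a,\beta}^D=-k_0^2$, and being the lowest point of the spectrum it satisfies $\zeta_{a,\beta}^D=\inf t_{a,\beta}^D[\varphi]/\|\varphi\|^2$, the infimum taken over the form domain.

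The lower bound is then immediate: from $\coth\ge1$ and $\kappa=k_0\coth(ak_0)$ we get $k_0\le\kappa$, hence $\zeta_{a,\beta}^D=-k_0^2\ge-\kappa^2$. (Equivalently, adding the Dirichlet condition at $u=a$ only raises eigenvalues relative to the half-line Robin Laplacian on $(0,\infty)$, whose unique negative eigenvalue is $-\kappa^2$.) For the upper bound I would feed the trial function $\varphi(u)=\sinh(\kappa(a-u))$, which lies in $\{f\in H^1(0,a):f(a)=0\}$, into the Rayleigh quotient. A short computation using $2\sinh^2 x=\cosh(2x)-1$ and $\sinh x-\cosh x=-\e^{-x}$ gives $\|\varphi\|^2=\big(\sinh(2\kappa a)-2\kappa a\big)/(4\kappa)$ and
\[
t_{a,\beta}^D[\varphi]=-\kappa^2\|\varphi\|^2+\frac{\kappa}{2}\big(1-\e^{-2\kappa a}\big),
\]
whence, by the variational principle,
\[
\zeta_{a,\beta}^D\le-\kappa^2+\frac{2\kappa^2\big(1-\e^{-2\kappa a}\big)}{\sinh(2\kappa a)-2\kappa a}.
\]

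It remains to dominate the last fraction by $4\kappa^2\e^{-a\kappa}$. Writing $t:=a\kappa\ge\frac43$, I would first check the elementary inequality $\sinh(2t)-2t\ge\frac14\e^{2t}$ --- equivalently $\e^{2t}-2\e^{-2t}-8t\ge0$, which holds because the left-hand side is positive at $t=\frac43$ and has positive derivative for all $t\ge\frac43$. Consequently the fraction is at most $8\kappa^2\e^{-2t}$, and $8\kappa^2\e^{-2t}\le4\kappa^2\e^{-t}$ because $t\ge\frac43>\ln2$; this is exactly the asserted bound. The only mildly delicate point of the whole argument is this closing chain of one-variable estimates, in particular the fact that the trial function above is engineered so that the numerator of the Rayleigh quotient collapses to $\frac{\kappa}{2}(1-\e^{-2\kappa a})$; the rest is routine Sturm--Liouville bookkeeping.
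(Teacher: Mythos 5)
Your proof is correct; I checked the identification of the eigenvalue condition $k\coth(ak)=\kappa$, the monotonicity and limiting values of $k\mapsto k\coth(ak)$, the Rayleigh-quotient computation with the trial function $\sinh(\kappa(a-u))$ (the numerator does collapse to $\frac{\kappa}{2}(1-\e^{-2\kappa a})$), and the closing chain $\sinh(2t)-2t\ge\frac14\e^{2t}$ and $8\e^{-2t}\le4\e^{-t}$ for $t\ge\frac43$. Your route differs from the paper's in one substantive respect: the paper obtains \emph{both} bounds by analyzing the transcendental equation directly --- it writes the root as $k=\kappa-s$ with $0<s<\kappa/2$ (this is where the hypothesis $a\kappa>\frac43$ enters there) and bounds $s\le 2\kappa\,\e^{-a\kappa}$ from the equation itself, whence $-k^2\le-\kappa^2+4\kappa^2\e^{-a\kappa}$ --- whereas you get the lower bound from the trivial observation $k_0\le\kappa$ (equivalently $\coth\ge1$) and the upper bound variationally. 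The paper's method gives slightly more, namely a quantitative localization of the root $k_0$ itself; your variational argument is arguably more robust (it would survive perturbations of the boundary condition without re-deriving the secular equation) and your uniqueness argument via the monotone function $k\coth(ak)$, with threshold $\kappa>1/a$, is cleaner than the paper's analysis of $g_{a,\beta}(k)=2ak+\log(\kappa-k)-\log(\kappa+k)$. The only cosmetic slip is writing $t\ge\frac43$ where the hypothesis gives strict inequality; since your elementary inequalities already hold at $t=\frac43$, nothing is affected.
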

\begin{proof}
Notice that the domain of the operator is
$$
D(T_{a,\beta}^D)=\left\{\varphi\in H^2(0,a):\
\varphi'(0)=-\(\frac{\gamma_*}{2}+\beta\)\varphi(0),\ \varphi(a)=0
\right\}.
$$
Assume that $-k^2$ with $k>0$ is an eigenvalue of $T_{a,\beta}^D,$ and let a nonzero $\varphi$ be the corresponding eigenfunction, then we have
\begin{enumerate}
\item $-\varphi''(u)=-k^2\varphi(u);$
\item $\varphi'(0)=-\(\frac{\gamma_*}{2}+\beta\)\varphi(0);$
\item $\varphi(a)=0.$
\end{enumerate}
In view of the first property, the eigenfunction $\varphi$ is of the form
\[
\varphi(u)=A\e^{ku}+B\e^{-ku}.
\]
Furthermore, the requirements (2) and (3) yield $kA-kB=\(-\frac{\gamma_*}{2}-\beta\)(A+B)$ and  $A\e^{ka}+B\e^{-ka}=0,$ respectively. Thus the coefficients $A,\,B$ have to satisfy the equation
\[
\begin{pmatrix}\e^{ka}&\e^{-ka}\\k+\frac{\gamma_*}{2}+\beta&-(k-\frac{\gamma_*}{2}-\beta)\end{pmatrix}
\begin{pmatrix}A\\B\end{pmatrix}=0.
\]
Since $(A,B)\neq(0,0),$ we get
\[
\det\begin{pmatrix}\e^{ka}&\e^{-ka}\\k+\frac{\gamma_*}{2}+\beta&-(k-\frac{\gamma_*}{2}-\beta)\end{pmatrix}=0
\]
which is equivalent to $g_{a,\beta}(k):=2ak+\log\(\beta+\frac{\gamma_*}{2}-k\)-\log\(\beta+\frac{\gamma_*}{2}+k\)=0.$ It is easy to see that also the converse is true: if $g_{a,\beta}(k)=0$, then $-k^2$ is an eigenvalue of $T_{a,\beta}^D.$ Let us now show that $g_{a,\beta}(.)$ has a unique zero in
$\(0,\beta+\frac{\gamma_*}{2}\)$. By definition we have $g_{a,\beta}(0)=0,$ and since
\[
\frac{\d g_{a,\beta}(k)}{\d k}=\frac{2a\(\beta+\frac{\gamma_*}{2}\)^2-2\(\beta+\frac{\gamma_*}{2}\)-2ak^2}{\(\beta+\frac{\gamma_*}{2}\)^2-k^2}
\]
we can claim that $g_{a,\beta}$ is monotonically increasing in $\(0,\ \beta+\dsfrac{\gamma_*}{2}-\dsfrac{1}{a}\)$ and it is monotonically decreasing in $\(\beta+\dsfrac{\gamma_*}{2}-\dsfrac{1}{a}\,,\ \beta+\dsfrac{\gamma_*}{2}\).$ Moreover, we have
\[
\lim\limits_{k\rightarrow \beta+\frac{\gamma_*}{2}}g_{a,\beta}=-\infty\,;
\]
this implies that the function $g_{a,\beta}$ has a unique zero in $\(0,\beta+\dsfrac{\gamma_*}{2}\).$ Moreover, since $a(\beta+\dsfrac{\gamma_*}{2})> \frac{4}{3},$ we have $\sqrt{\(\beta+\dsfrac{\gamma_*}{2}\)\(\beta+\dsfrac{\gamma_*}{2}-\dsfrac{1}{a}\)}>\dsfrac{1}{2}\(\beta+\dsfrac{\gamma_*}{2}\).$
Consequently, the solution $k$ has the form $k=\beta+\dsfrac{\gamma_*}{2}-s,$ $0<s<\frac{1}{2}\(\beta+\frac{\gamma_*}{2}\).$ Taking into account
the relation $g_{a,\beta}(k)=0,$ we get
\[
\log s=\log(2\beta+\gamma_*-s)-2a\(\beta+\frac{\gamma_*}{2}-s\)\leq\log(2\beta+\gamma_*)-a\(\beta+\frac{\gamma_*}{2}\).
\]
Hence we obtain $s\leq(2\beta+\gamma_*)\e^{-a\(\beta+\gamma_*/2\)}\,$ which concludes the proof.
\end{proof}

Next we estimate the first eigenvalue of $T_{a,\beta}^N.$

\begin{lemma}\label{lemma T N eigenvalue}
Assume that
$\(\beta+\frac{\gamma^*}{2}\)>\max\left\{\frac{\gamma_+}{2(1-a\gamma_+)},\ \frac{2\log 5}{3a}\right\}.$ Then $T_{a,\beta}^N$ has a unique negative eigenvalue $\zeta_{a,\beta}^N,$ and moreover, we have
\[
-\(\beta+\frac{\gamma^*}{2}\)^2-\frac{45}{4}\(\beta+\frac{\gamma^*}{2}\)^2
\e^{-a\(\beta+\frac{\gamma^*}{2}\)}\leq\zeta_{a,\beta}^N\leq
-\(\beta+\frac{\gamma^*}{2}\)^2.
\]
\end{lemma}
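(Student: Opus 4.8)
The plan is to follow closely the proof of Lemma~\ref{lemma T D eigenvalue}, the only essential change being that the Dirichlet condition at $u=a$ is replaced by the Neumann condition $\varphi'(a)=0$, which is the one natural for the form $t_{a,\beta}^N$; a consequence of this change is that the eigenvalue will now lie \emph{below} $-\omega^2$ rather than above it, where throughout I write $\omega:=\beta+\frac{\gamma^*}{2}$. First I would record that the operator domain is
\[
D(T_{a,\beta}^N)=\left\{\varphi\in H^2(0,a):\ \varphi'(0)=-\omega\,\varphi(0),\ \varphi'(a)=0\right\}.
\]
If $-k^2$ with $k>0$ is an eigenvalue, the eigenfunction is $\varphi(u)=A\e^{ku}+B\e^{-ku}$; the condition $\varphi'(a)=0$ gives $B=A\e^{2ka}$, and the condition $\varphi'(0)=-\omega\varphi(0)$ gives $(k+\omega)A+(\omega-k)B=0$, so $A\neq0$ and $k$ must satisfy
\[
\e^{2ka}(k-\omega)=k+\omega.
\]
Since the right-hand side is positive while the left-hand side is $\le0$ for $0<k\le\omega$, every negative eigenvalue has $k>\omega$; a constant function is excluded by $\omega>0$, so there is no zero eigenvalue either, and hence already $\zeta_{a,\beta}^N<-\omega^2$, which is the asserted upper bound.

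For $k>\omega$ the last equation is equivalent to
\[
h_{a,\beta}(k):=2ak+\log(k-\omega)-\log(k+\omega)=0,
\]
and conversely, exactly as in Lemma~\ref{lemma T D eigenvalue}, any zero of $h_{a,\beta}$ produces an eigenvalue $-k^2$. Since $\lim_{k\to\omega^+}h_{a,\beta}(k)=-\infty$, $\lim_{k\to+\infty}h_{a,\beta}(k)=+\infty$, and
\[
h_{a,\beta}'(k)=2a+\frac{2\omega}{k^2-\omega^2}>0\qquad\textrm{for }k>\omega,
\]
the function $h_{a,\beta}$ has exactly one zero on $(\omega,+\infty)$. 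Consequently $T_{a,\beta}^N$ has precisely one negative eigenvalue, $\zeta_{a,\beta}^N=-k^2$ with $k>\omega$.

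To obtain the lower bound, i.e.\ a quantitative upper bound on $k$, I would write $k=\omega+s$ with $s>0$; the relation $\e^{2ka}(k-\omega)=k+\omega$ then reads $\e^{2ka}=1+\frac{2\omega}{s}$, so that
\[
\frac{2\omega}{s}=\e^{2ka}-1\ge\e^{2a\omega}-1>\frac45\,\e^{2a\omega},
\]
the last step because the hypothesis $a\omega>\frac23\log5$ forces $\e^{-2a\omega}<5^{-4/3}<\frac15$. Hence $s<\frac52\,\omega\,\e^{-2a\omega}$, and therefore
\[
k^2=\omega^2+2\omega s+s^2<\omega^2+5\,\omega^2\e^{-2a\omega}+\frac{25}{4}\,\omega^2\e^{-4a\omega}\le\omega^2\left(1+\frac{45}{4}\,\e^{-a\omega}\right),
\]
using $\e^{-2a\omega}\le\e^{-a\omega}$ and $\e^{-4a\omega}\le\e^{-a\omega}$. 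This is precisely $\zeta_{a,\beta}^N=-k^2\ge-\omega^2-\frac{45}{4}\,\omega^2\e^{-a\omega}$, which would complete the proof.

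None of these steps is genuinely hard: the computation of the eigenfunctions and the monotonicity of $h_{a,\beta}$ parallel Lemma~\ref{lemma T D eigenvalue} almost verbatim. The two points requiring a little attention are (i) identifying the correct spectral regime $k>\omega$, since the Neumann endpoint, unlike the Dirichlet one, lowers the eigenvalue, so that the transcendental equation is analysed on the half-line $(\omega,+\infty)$ instead of on a bounded interval as in Lemma~\ref{lemma T D eigenvalue}; and (ii) the bookkeeping of the numerical constants, which must be carried out so that the prefactor comes out exactly $\frac{45}{4}$, and for which the assumed bound $a\omega>\frac23\log5$ is precisely what makes $\e^{2a\omega}-1$ comparable to $\e^{2a\omega}$. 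The other standing hypothesis, $\omega>\frac{\gamma_+}{2(1-a\gamma_+)}$, is not used in this argument.
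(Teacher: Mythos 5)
Your argument is internally sound, but it analyzes the wrong operator, and the discrepancy hits exactly the half of the statement that matters. The displayed formula for $t_{a,\beta}^N$ in the paper omits a boundary term at $u=a$ (evidently a slip): for the tensor decomposition \eqref{H tilde as sum} to hold, the term $-\frac{\gamma_+}{2(1-a\gamma_+)}\int_0^L|\varphi(s,a)|^2\,\d s$ appearing in $\widetilde b_{a,\beta}^N$ must be absorbed into the transverse form, so the intended $t_{a,\beta}^N$ carries the extra term $-\frac{\gamma_+}{2(1-a\gamma_+)}|\varphi(a)|^2$ and the operator domain has the Robin condition $\varphi'(a)=\frac{\gamma_+}{2(1-a\gamma_+)}\varphi(a)$ rather than $\varphi'(a)=0$; this is indeed the domain the paper's own proof writes down. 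Your closing observation that the hypothesis $\beta+\frac{\gamma^*}{2}>\frac{\gamma_+}{2(1-a\gamma_+)}$ is never used was the tell-tale sign: that hypothesis exists precisely to control this extra term.

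The omission is not cosmetic. The additional boundary term is attractive, so the true $T_{a,\beta}^N$ lies below your Neumann operator in the form sense; your lower bound on the Neumann eigenvalue therefore does \emph{not} transfer to a lower bound on $\zeta_{a,\beta}^N$, and the lower bound is the substantive half of the lemma (the upper bound $\zeta_{a,\beta}^N\le-\big(\beta+\frac{\gamma^*}{2}\big)^2$ does survive by monotonicity). With the Robin condition the secular equation becomes \eqref{equation for T N eigenvalue}, i.e.\ your equation acquires the extra factor $\frac{k+c}{k-c}$ with $c=\frac{\gamma_+}{2(1-a\gamma_+)}$; the paper uses $c<\beta+\frac{\gamma^*}{2}$ to bound this factor by $\frac{k+\beta+\gamma^*/2}{k-\beta-\gamma^*/2}$, rules out $k\ge\frac32\big(\beta+\frac{\gamma^*}{2}\big)$ via $\e^{2ka}>25$, and then obtains $s\le\frac52\big(\beta+\frac{\gamma^*}{2}\big)\e^{-a(\beta+\gamma^*/2)}$, which yields the constant $\frac{45}{4}$. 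The repair is routine — your localization $k>\beta+\frac{\gamma^*}{2}$ and the final arithmetic go through essentially unchanged once the squared factor is inserted — but as written the proposal does not prove the stated lemma for the operator the rest of the paper needs.
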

\begin{proof}
The operator domain in this case looks as follows,
$$
D(T_{a,\beta}^N)=\left\{\varphi\in H^2(0,a):\
\varphi'(0)=-\(\frac{\gamma^*}{2}+\beta\)\varphi(0),\
\varphi'(a)=\frac{\gamma_+}{2(1-a\gamma_+)}\varphi(a) \right\}.
$$
Assume again that $-k^2$ with $k>0$ is an eigenvalue of $T_{a,\beta}^N$ corresponding to a nonzero eigenfunction $\varphi$. As in the proof of Lemma
\ref{lemma T D eigenvalue} we infer that $-k^2$ is an eigenvalue of $T_{a,\beta}^N$ if and only if\footnote{There is a misplaced exponential in the analogous proof in \cite{Exner_Yoshitomi 2002} which does not affect the claim.}
\begin{equation}\label{equation for T N eigenvalue}
\e^{2ka}=\frac{k+\frac{\gamma^*}{2}+\beta\
}{k-\frac{\gamma^*}{2}-\beta\ }\ \cdot\
\frac{k+\frac{\gamma_+}{2(1-a\gamma_+)}\
}{k-\frac{\gamma_+}{2(1-a\gamma_+)}}\ .
\end{equation}
Since the left-hand side of the last equation is strictly increasing, and the right-hand side is strictly decreasing for $k>0$, then the equation (\ref{equation for T N eigenvalue}) has a unique positive solution which lies in fact in the subinterval $\(\beta+\frac{\gamma^*}{2},+\infty\).$

Next we will show that (\ref{equation for T N eigenvalue}) has no solutions in the interval $k\geq\frac{3}{2}\(\beta+\frac{\gamma^*}{2}\).$ Suppose that the opposite is true. As $\frac{\gamma_+}{2(1-a\gamma_+)}<\beta+\frac{\gamma^*}{2},$ we have
\[
\frac{k+\frac{\gamma^*}{2}+\beta\ }{k-\frac{\gamma^*}{2}-\beta\
}\ \cdot\ \frac{k+\frac{\gamma_+}{2(1-a\gamma_+)}\
}{k-\frac{\gamma_+}{2(1-a\gamma_+)}}\leq
\(\frac{k+\frac{\gamma^*}{2}+\beta\ }{k-\frac{\gamma^*}{2}-\beta\
}\)^2.
\]
However, since we assume $k\geq\frac{3}{2}\(\beta+\frac{\gamma^*}{2}\),$ this would imply
 \[
\e^{2ka} \leq
\(\frac{\frac{3}{2}\(\beta+\frac{\gamma^*}{2}\)+\frac{\gamma^*}{2}+\beta\
}{\frac{3}{2}\(\beta+\frac{\gamma^*}{2}\)-\frac{\gamma^*}{2}-\beta\
}\)^2=25.
\]
On the other hand, we have $\e^{2ka}\geq\e^{3a\(\beta+\frac{\gamma^*}{2}\)}>25,$ so we come to a contradiction. Hence the solution $k$ of (\ref{equation for T N eigenvalue}) is of the form $k=\beta+\frac{\gamma^*}{2}+s$ with $0<s<\frac{1}{2}\(\beta+\frac{\gamma^*}{2}\),$ and using (\ref{equation
for T N eigenvalue}) once again we get
\[
\e^{2ka}\leq
\(\frac{k+\frac{\gamma^*}{2}+\beta\ }{k-\frac{\gamma^*}{2}-\beta\
}\)^2\leq \(\frac{2\beta+\gamma^*+s}{s }\)^2\leq
\(\frac{\frac{5}{2}\(\beta+\frac{\gamma^*}{2}\)}{s}\)^2,
\]
which further implies
\[
s\leq\frac{5}{2}\(\beta+\frac{\gamma^*}{2}\)\e^{-ka}=
\frac{5}{2}\(\beta+\frac{\gamma^*}{2}\)\e^{-a\(\beta+\frac{\gamma^*}{2}\)-sa}\leq
\frac{5}{2}\(\beta+\frac{\gamma^*}{2}\)\e^{-a\(\beta+\frac{\gamma^*}{2}\)}.
\]
This completes the proof of Lemma \ref{lemma T N eigenvalue}.
\end{proof}

Now we are finally in position to \textsl{prove Theorem \ref{main theorem}.}  We use first the bracketing to squeeze the eigenvalues in question between those of the operators \eqref{H tilde as sum}. Since the latter have separated variables, their eigenvalues are sums of eigenvalues of the longitudinal and transverse component which we have estimated in Lemmata \ref{longlemma} and \ref{lemma T D eigenvalue}, \ref{lemma T N eigenvalue}, respectively, and it is sufficient to choose $a= \frac{6}{\beta} \log\beta$ to get \eqref{estimate_lambda n}. $\hfill\Box$

\medskip

Note that while the argument is pretty much the same as in the proof of Theorem~1 of \cite{Exner_Yoshitomi 2002}, with Propositions 2.3--2.5 there replaced by the above mentioned lemmata, the result is much weaker due to the presence of the last term in the form $b^D_{a,\beta}$  and its counterpart in $b^N_{a,\beta}$. In particular, the estimates of Theorem~ \ref{main theorem} squeeze to produce an exact asymptotic expansion if and only if the curvature is constant, $\gamma^*=\gamma_*$. Let us now look at this case in more detail:

\begin{example}
Let $\Omega$ be a disc of radius $R$ centered at the origin. In this case we have
\begin{equation}\label{gamma explicit}
\gamma(s)\equiv\gamma=\frac{1}{R}
\end{equation}
and the  eigenvalues $\mu_j$ of the comparison operator $S$ given by (\ref{S}) can be computed explicitly,
\begin{equation}\label{mu_j explicit}
\mu_j=\(-\ \frac{1}{4}+\left[\frac{j}{2}\right]^2\)R^{-2},
\end{equation}
where $[y]$ denotes the maximum integer which less or equal to $y.$ We introduce the usual polar coordinates,
\[
\left\{\begin{array}{ccc}x=r\cos
\theta\\y=r\sin\theta\end{array}\right. \qquad0\leq r\leq R\,,\
0\leq\theta<2\pi\,,
\]
writing with an abuse of notation $f(x,y)\equiv f(r,\theta).$ Equations (\ref{problem}) with $\lambda=-k^2$ now read
\begin{equation}\label{Laplacian polar coordinates and boundary condition}
\left\{\begin{array}{l}\dsfrac{\partial^2f}{\partial
r^2}+\frac{1}{r}\frac{\partial f}{\partial r}+ \frac{1}{r^2}\
\frac{\partial^2f}{\partial\theta^2}=k^2f,\\\\\left.\dsfrac{\partial
f}{\partial r}\right|_{\ r=R}=\beta
f.\end{array}\right.
\end{equation}
Solution to the first equation in (\ref{Laplacian polar coordinates and boundary condition}) is conventionally sought in the form
\begin{equation}\label{f as a sum polar coordinates}
f(r,\theta)=\sum\limits_{m\in\mathbb{Z}}c_m I_m(kr)\e^{\i
m\theta}.
\end{equation}
Furthermore, the Hamiltonian commutes with the angular momentum operator, $-i \frac{\partial}{\partial\theta}$ with periodic boundary conditions, hence the two operators have common eigenspaces, and we can consider sequence $\{c_m\}$ with nonzero $c_m$ corresponding to a single values of $|m|$; it goes without saying that the discrete spectrum has multiplicity two except the eigenvalue corresponding to $m=0$ which is simple. The boundary condition in (\ref{Laplacian polar coordinates and boundary condition}) can be then rewritten as
\begin{equation}\label{equation I_m (1)}
k I'_m(kR)-\beta I_m(kR)=0.
\end{equation}
for a fixed $m\in\Z$. To find its solutions, let us change the variables to $X=kR,$ $\alpha=\beta R,$ in which case the condition  (\ref{equation I_m (1)}) reads
\begin{equation}\label{equation I_m (2)}
\frac{X I'_m(X)}{I_m(X)}=\alpha.
\end{equation}
The function at the left-hand side of (\ref{equation I_m (2)}) is strictly increasing for $k>0,$ hence (\ref{equation I_m (2)}) has a unique solution for any fixed $\alpha$ and $m$. As $\alpha\rightarrow+\infty,$ so does $X$ in (\ref{equation I_m
(2)}), and using the well-known asymptotics of modified Bessel functions, we find
$$
\frac{XI'_m(X)}{I_m(X)}=X-\frac{1}{2}+\frac{4m^2-1}{8X}+O(X^{-2}),\quad
X\rightarrow+\infty.
$$
In combination with the spectral condition (\ref{equation I_m (2)}) this yields
$$
X=\alpha+\frac{1}{2}-\frac{4m^2-1}{8\alpha}+O(\alpha^{-2}),\quad\alpha\rightarrow+\infty.
$$
This, in turn, implies the asymptotics for $X^2$, and returning to the original variables $\beta, k$ we find
$$
-k^2= -\(\beta+\frac{1}{2R}\)^2+\(m^2-\frac{1}{4}\)R^{-2}+\OO(\beta^{-1}),\quad\beta\rightarrow+\infty.
$$
This agrees, of course, with the conclusion of Theorem~\ref{main theorem} according to (\ref{gamma explicit}) and (\ref{mu_j explicit}). At the same time it shows that there is not much room for improving the error term in the theorem, because it differs from the one in this explicitly solvable example by the logarithmic factor only.
\end{example}

\section{Proof of Theorem \ref{thm: better upper}}

The idea is to replace the crude estimate of $B_{a,\beta}^D$ from Lemma~\ref{l: uniteq} by the first operator in (\ref{H tilde as sum}) by a finer one.
Consider first the principal eigenvalue which satisfies $\lambda_1(\beta)\le b_{a,\beta}^D[\varphi]$ for any $\varphi\in Q_a^D$ and choose the following family of trial functions,
$$\hat\varphi(s,u)=\chi_{\varepsilon}(s)\(\e^{-\alpha
u}-\e^{-2a\alpha+u\alpha}\),$$
where $\chi_{\varepsilon}$ is a smooth function on $[0,L]$ with the support located in an $\varepsilon$-neighborhood of a point $s^*$ in which the curvature reaches its maximum, $\gamma(s^*)=\gamma^*$, and $\varepsilon$ is a parameter to be determined later. In view of the boundary compactness and smoothness, at least one such point exists; without loss of generality we may assume that $(s^*-\varepsilon,s^*+\varepsilon)\subset(0,L)$. We shall consider functions of the form
\[\chi_{\varepsilon}(s):=\chi\(\dsfrac{s-s^*+\varepsilon}{2\varepsilon}\),\]
where $\chi(x)$ is a fixed smooth function on $\mathbb{R}$ with the support in the interval $(0,1)$; then we have
\begin{equation}\label{norm chi}
\|\chi_{\varepsilon}\|_{L^2(0,L)}^2=2\varepsilon\|\chi\|_{L^2(0,1)}^2,\quad
\|\chi'_{\varepsilon}\|_{L^2(0,L)}^2=\(2\varepsilon\)^{-1}\|\chi'\|_{L^2(0,1)}^2.
\end{equation}
We also note that on the support of $\chi_{\varepsilon}$, i.e. for any $s\in(s^*-\varepsilon,s^*+\varepsilon)$ we have
\[|\gamma(s)-\gamma^*|<\gamma'_+ |s-s^*| < \gamma'_+ \varepsilon.\]
Computing the terms of the form $b_{a,\beta}^D[\varphi]$ we get for the longitudinal kinetic contribution the estimate
\begin{eqnarray*}
\lefteqn{\hspace{-1.5em} \iint\limits_{0\ 0}^{\ \ \ a\ L}
\frac{1}{(1-u\gamma(s))^2}\left|\frac{\partial\hat\varphi}{\partial
s}\right|^2\d s\,\d u \le \int\limits_{0}^a\int\limits_{
s^*-\varepsilon}^{s^*+\varepsilon}
\(\frac{1}{(1-u\gamma^*)^2}+C\varepsilon u\)\(\e^{-\alpha
u}-\e^{-2a\alpha+u\alpha}\)^2} \\ &&  \times (\chi'(s))^2\d s\,\d u
\left[\(\dsfrac{1}{2\alpha}+\OO\big(\alpha^{-2}\)\big)+C\varepsilon\(\dsfrac{1}{4\alpha^2}+
\OO\big(\alpha^{-3}\)\big)\right]\|\chi\|^2_{L^2(0,L)},
\end{eqnarray*}
where $C>0$ is a generic constant independent of $\beta, a$, and $\varepsilon.$ Similarly,
\[\iint\limits_{0\ 0}^{\ \ \ a\ L}
\left|\frac{\partial\hat\varphi}{\partial u}\right|^2\d s\,\d
u=\dsfrac{\alpha}{2}\(1+\OO\big(\alpha\e^{-2a\alpha}\big)\)\|\chi\|_{L^2(0,L)}^2\]
holds for the transverse kinetic term,
\begin{eqnarray*}
\lefteqn{\int\limits_{0}^a\int\limits_{ 0}^{L}V(s,u) \left|\hat\varphi\right|^2\d
s\,\d u \le \int\limits_{0}^a\int\limits_{
s^*-\varepsilon}^{s^*+\varepsilon}\(-\dsfrac{\(\gamma^*\)^2}{4(1-u\gamma^*)}
-\dsfrac{u\gamma''(s^*)}{2(1-u\gamma^*)^3}-\dsfrac{5}{4}\dsfrac{u^2\(\gamma'(s^*)\)^2}
{\(1-u\gamma^*\)^4}+C\varepsilon\)} \\ && \hspace{-1em} \times \(\e^{-\alpha
u}-\e^{-2a\alpha+u\alpha}\)^2 |\chi(s)|^2\d s\d u
=\(\dsfrac{-\(\gamma^*\)^2}{4}+C\varepsilon\)
\dsfrac{1}{2\alpha}\(1+\OO\big(\alpha^{-1}\big)\)\|\chi\|^2_{L^2(0,L)}
\end{eqnarray*}
for the potential one, and
\[-\int\limits_0^L\(\frac{\gamma(s)}{2}+\beta\)|\hat\varphi(s,0)|^2\d s\geq-\(\beta+\dsfrac{\gamma^*-\varepsilon}{2}\)
\(1-\e^{-2a\alpha}\)^2\|\chi\|_{L^2(0,L)}^2,\]
for the boundary one. Finally, the trial function norm satisfies
\[\iint\limits_{0\ 0}^{\ \ \ a\ L}|\hat\varphi(s,u)|^2\d s\d u=\dsfrac{1}{2\alpha}\(1+\OO\big(\alpha\e^{-2a\alpha}\big)\)\|\chi\|^2_{L^2(0,L)}.\]
Putting these expressions together and and taking (\ref{norm chi}) into account we get
\begin{eqnarray*}
\lefteqn{\dsfrac{b_{a,\beta}^D[\hat\varphi]}{\|\hat\varphi\|^2_{L^2(0,L)}}\le
\dsfrac{1}{4\varepsilon^2}\
\dsfrac{\|\chi'\|_{L^2(0,1)}^2}{\|\chi\|_{L^2(0,1)}^2}\(1+\OO\big(\alpha^{-1}\big)+C\varepsilon\(\dsfrac{1}{2\alpha}+\mathrm{\alpha^{-2}}\)\)} \\ &&
+\alpha^2\(1+\OO\big(\alpha\e^{-2a\alpha}\big)\) +
\(\dsfrac{-\(\gamma^*\)^2}{4}+C\varepsilon\)\(1+\OO\big(\alpha^{-1}\big)\) \\ &&
-2\alpha\(\beta+\dsfrac{\gamma^*-\varepsilon}{2}\)\(1+\OO\big(\alpha\e^{-2a\alpha}\big)\).
\end{eqnarray*}
Now we choose $\alpha=\beta+\dsfrac{\gamma^*}{2}$ in which case the right-hand side of the last inequality becomes
\begin{eqnarray*}
\lefteqn{\hspace{-2em} \dsfrac{1}{4\varepsilon^2}\
\dsfrac{\|\chi'\|_{L^2(0,1)}^2}{\|\chi\|_{L^2(0,1)}^2}\(1+\OO\big(\beta^{-1}\big)+C\varepsilon\(\dsfrac{1}{2\beta}+\mathrm{\beta^{-2}}\)\)
-\(\beta+\dsfrac{\gamma^*}{2}\)^2} \\ && +\varepsilon\(\beta+\dsfrac{\gamma^*}{2}\)
+\(\dsfrac{-\(\gamma^*\)^2}{4}+C\varepsilon\)\(1+\OO\big(\alpha^{-1}\big)\),
\end{eqnarray*}
and to optimize the last formula with respect to $\varepsilon$ we take $\varepsilon=\beta^{-1/3},$ which yields the estimate
\begin{equation} \label{finalbound}
\dsfrac{b_{a,\beta}^D[\hat\varphi]}{\|\hat\varphi\|^2_{L^2(0,L)}}\le-\(\beta+\dsfrac{\gamma^*}{2}\)^2+\OO\big(\beta^{2/3}\big)
\end{equation}
proving the result. The argument for the higher eigenfunctions proceeds in the same way. We employ trial functions of the form
$$\hat\varphi_j(s,u)=\chi_{\varepsilon,j}(s)\(\e^{-\alpha u}-\e^{-2a\alpha+u\alpha}\),$$
where the longitudinal part is constructed from a shifted function $\chi$, for instance
\[\chi_{\varepsilon,j}(s):=\chi\(\dsfrac{s-s^*+(2j-1)\varepsilon}{2\varepsilon}\).\]
The above estimate of the form remains essentially the same, up to the values of the constants involved. By construction, the functions $\chi_{\varepsilon,j}$ with different values of~$j$ have disjoint supports, hence $\hat\varphi_j$ is orthogonal to $\hat\varphi_i,\: i=1,\dots,j-1,$ and by the min-max principle \cite[Sec.~XIII.1]{Reed Simon 4 1978} the eigenvalue $\lambda_j(\beta)$ has again the upper bound given by the right-hand side of (\ref{finalbound}). \quad $\Box$

\frenchspacing


\begin{thebibliography}{7}

\bibitem{DK10}
D.~Daners and J.~Kennedy: On the asymptotic behaviour of the eigenvalues of a Robin problem, \emph{Diff. Int. Eq.} \textbf{23} (2010), 659--669.

\bibitem{Ex08}
P.~Exner, Leaky quantum graphs: a review, in ``Analysis on graphs and its applications'', Proc. Symp. Pure Math., vol. 77; Amer. Math. Soc., Providence, R.I., 2008., pp.~523--564.

\bibitem{Exner_Seba 1989}
P. Exner and P. \v{S}eba,  Bound states in curved quantum waveguides. \emph{J. Math. Phys.} \textbf{30}, no. 11 (1989), 2574--2580.

\bibitem{Exner_Yoshitomi 2002}
P. Exner and  K. Yoshitomi, Asymptotics of eigenvalues of the Schr\"odinger operator with a strong $\delta$-interaction on a loop. \emph{J. Geom. Phys.} \textbf{41}, no. 4 (2002), 344--358.

\bibitem{Lacey...1996}
A.A. Lacey, J.R. Ockendon, J. Sabina, and D. Salazar,  Perturbation analysis of a semilinear parabolic problem with nonlinear boundary conditions. \emph{Rocky Mountain J. Math.} \textbf{26}, no.1 (1996), 195--212.

\bibitem{Lacey...1998}
A.A. Lacey, J.R. Ockendon, and J. Sabina, Multidimensional reaction diffusion equations with nonlinear boundary conditions. \emph{SIAM J. Appl. Math.} \textbf{58}, no.5 (1998), 1622--1647.

\bibitem{Levitin_Parnovskii 2008}
M. Levitin and L. Parnovski, On the principal eigenvalue of a Robin problem with a large parameter. \emph{Math. Nachr.} \textbf{281}, no. 2 (2008), 272--281.

\bibitem{Pankrashkin 2013}
K. Pankrashkin, On the asymptotics of the principal eigenvalue for a Robin problem with a large parameter in planar domains. \emph{Nanosystems: Phys. Chem. Math.} \textbf{4}, no. 4 (2013), 474--483.

\bibitem{Reed Simon 4 1978}
M. Reed and B. Simon, \emph{Methods of Modern Mathematical Physics. IV. Analysis of Operators}, Academic Press, San Diego, 1978.

\end{thebibliography}
\end{document}